\def\masterthesis{0}
\def\cryptology{0} 
\def\sigconf{0} 
\def\big{0} 
\def\draft{0}
\def\anon{0}
\def\shownomenclature{1}
\def\smallbib{1} 
\def\toc{0} 
\DeclareMathAlphabet{\mathpzc}{OT1}{pzc}{m}{it}
\newcommand{\OTbit}{\ensuremath{b}\xspace}
\newcommand{\CoinBit}{\ensuremath{c}\xspace}
\newcommand{\CommBit}{\ensuremath{b}\xspace}
\newcommand{\Amflip}{\ensuremath{\mathsf{AmFlip}}\xspace}
\newcommand{\OT}{\ensuremath{\mathsf{OT}}\xspace}
\newcommand{\abort}{\ensuremath{\mathsf{Abort}}\xspace}
\newcommand{\AmCom}{\ensuremath{\mathsf{AmCom}}\xspace}
\newcommand{\Sup}{\ensuremath{\mathsf{Sup}}\xspace}
\newcommand{\MOEExp}{\ensuremath{\mathsf{(\frac{1}{2}+\frac{1}{2\sqrt{2}})}}\xspace}
\newcommand{\MOEExpInv}{\ensuremath{\mathsf{4-2\sqrt{2}}}\xspace}
\newcommand{\WRadv}{\ensuremath{W_{\Radv^*_1}
}\xspace}
\newcommand{\CRadv}{\ensuremath{ C_{\Radv^*_1}
}\xspace}
\newcommand{\CTMAC}{\ensuremath{\mathsf{CTMAC}}\xspace}
\newcommand{\Bitspace}{\ensuremath{\mathsf{\{0,1\}}}\xspace}
\newcommand{\Cons}{\ensuremath{\mathsf{Cons}}\xspace}
\newcommand{\AmOT}{\ensuremath{\mathsf{AmROT}}\xspace}
\newcommand{\MOE}{\ensuremath{\mathsf{MOE}}\xspace}
\newcommand{\NP}{\ensuremath{\mathsf{NP}}\xspace}
\newcommand{\IP}{\ensuremath{\mathsf{IP}}\xspace}
\newcommand{\Badv}{\ensuremath{\mathsf{\mathcal{B}}}\xspace}
\newcommand{\Cadv}{\ensuremath{\mathsf{\mathcal{C}}}\xspace}
\newcommand{\Radv}{\ensuremath{\mathsf{\mathcal{R}}}\xspace}
\newcommand{\Sadv}{\ensuremath{\mathsf{\mathcal{S}}}\xspace}
\newcommand{\Hash}{\ensuremath{\mathsf{\mathcal{H}}}\xspace}
\newcommand{\Dadv}{\ensuremath{\mathsf{\mathcal{D}}}\xspace}
\newcommand{\Fadv}{\ensuremath{\mathsf{\mathcal{F}}}\xspace}
\newcommand{\Stall}{\ensuremath{\mathsf{Stall}}\xspace}
\newcommand{\Tmac}{\ensuremath{\mathsf{TMAC}}\xspace}
\newcommand{\Stamp}{\ensuremath{\mathsf{\psi}} \xspace}
\begin{document}
\ifnum\masterthesis=0
    \title{
 Quantum Amnesia Leaves Cryptographic Mementos:\\  A  Note On Quantum Skepticism
    }
\fi

\ifnum\anon=0
    \ifnum\masterthesis=0
        \ifnum\sigconf=0
            \ifnum\cryptology=1

                \author{Or Sattath}
                \affil{Department of Computer Science, Ben Gurion University of the Negev, Beersheba, Israel\\
                sattath@post.bgu.ac.il}
                \author{Uriel Shinar}
                \affil{Department of Computer Science, Ben Gurion University of the Negev, Beersheba, Israel\\
                shinaru@post.bgu.ac.il}
            \else
                \author[1]{Or Sattath}
                \author[1]{Uriel Shinar}

                \affil[1]{Computer Science Department, Ben-Gurion University of the Negev}
            \fi
        \else
            \author{Or Sattath}
            \affiliation{%
            \institution{Computer Science Department, Ben-Gurion University of the Negev}
            \country{Israel}}
        \fi
    \fi
\else
    \ifnum\sigconf=0
        \author{}
    \fi
\fi

\ifnum\masterthesis=0
    \ifnum\sigconf=0
        \maketitle
    \fi
\fi

\ifnum\masterthesis=1
    \begin{titlepage}
        \centering
        { Ben-Gurion University of the Negev}
        
        {The Faculty of Natural Sciences}
        
        {\small The Department of Computer Science}
        
        \vspace{2cm}
        
        \includegraphics[scale=\logoscale]{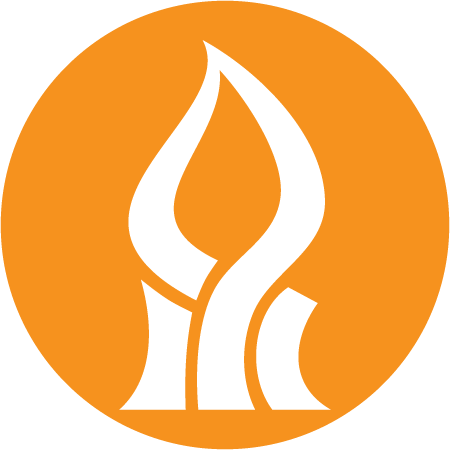}
        
        \vspace{2cm}
        
        {\Large \bfseries Template of Thesis}
        
        \vspace{2cm}
        
        {\small Thesis submitted in partial fulfillment of the requirements for the Master of Sciences degree}
        
        \vspace{1cm}
        
        {\bfseries Name of Student}
        
        {Under the supervision of Dr. Or Sattath}
        
        \vspace{2cm}
        
        \today
    \end{titlepage}
    
    \begin{titlepage}
        \centering
        { Ben-Gurion University of the Negev}
        
        {The Faculty of Natural Sciences}
        
        {\small The Department of Computer Science}
        
        \vspace{2cm}
        
        {\Large \bfseries Template of Thesis}
        
        \vspace{2cm}
        
        {\small Thesis submitted in partial fulfillment of the requirements for the Master of Sciences degree}
        
        \vspace{1cm}
        
        {\bfseries Name of Student}
        
        {Under the supervision of Dr. Or Sattath}
        
        \vspace{1cm}
        
        {\small Signature of student: \longunderline Date: \longunderline}
        
        \vspace{0.5cm}
        
        {\small Signature of supervisor: \longunderline Date: \longunderline}
        
        \vspace{0.5cm}
        
        \begin{changemargin}{-1.5cm}{-1.5cm}
        \centering
            {\small Signature of chairperson of the committee for graduate studies: \longunderline Date: \longunderline}
        \end{changemargin}
        \vspace{2cm}
        
        \today
    \end{titlepage}

    \pagenumbering{roman}
    \begin{center}
        {\large \bfseries Title of Thesis}
        
        \vspace{0.5cm}
        
        {\bfseries Name of Student}
        
        \vspace{0.5cm}
        
        Thesis submitted in partial fulfillment of the requirements for the Master of Sciences degree
        
        \vspace{0.25cm}
        
        {Ben-Gurion University of the Negev}
        
        \vspace{0.25cm}
        
        \today
        
        \vspace{2cm}
        
        {\bfseries Abstract}
        
        \vspace{0.5cm}
    \end{center}
\else 
    \ifnum\sigconf=0
        \begin{abstract}
    \fi
\fi

Leonard Shelby, the protagonist of Memento, uses mementos in the form of tattoos
and pictures to handle his amnesia. Similar to Leonard, contemporary quantum computers
suffer from “quantum amnesia”: the inability to store quantum registers for a long duration.
Quantum computers can only retain classical “mementos” of quantum registers by measuring
them before those vanish. Some quantum skeptics argue that this quantum amnesia is
inherent. We point out that this variant of a skeptic world is roughly described by the quantum bounded storage model, and although it is a computational obstacle that annuls potential quantum computational advantage, the seemingly undesired properties provide a cryptographic advantage. Namely, providing exotic primitives promised by the quantum bounded storage model, such as unconditionally secure commitment and oblivious transfer schemes, with constructions involving nothing but transmission and measurement of BB84 states.

\ifnum\masterthesis=0
    \ifnum\sigconf=0
        \end{abstract}
    \fi
\fi

\ifnum\sigconf=1
    \begin{abstract}
        Abstract of SigConf goes here.
    \end{abstract}
    \keywords{}
    \maketitle
\fi

\ifnum\masterthesis=1
    \pagebreak
    \pagenumbering{arabic}
    \subsection*{Acknowledgments}
    \setcounter{tocdepth}{3}
    
    \pagebreak
    \tableofcontents
    \listoffigures 
    \listoftables 
    \pagebreak
\fi

\ifnum\toc=1
    \tableofcontents 
\fi

In 1994, Shor presented an efficient quantum algorithm for integer factorization, providing the first evidence that quantum computation might provide superpolynomial speedup over classical computation.
However,  three decades and billions of dollars worth of research funds later, quantum computers have not yet fulfilled their promise. 
 This has caused controversy. Quantum optimists believe we will reach Superiorita: a world where quantum computers have a superpolynomial computational advantage, while the skeptics believe we will forever stay in Skepticland: a world where quantum computers have no such advantage.\footnote{These terms were coined by Barak~\cite{Bar17}, inspired by Impagliazzo's five worlds~\cite{Imp95}.}

One popular objection to quantum advantage is the belief that it is impossible or infeasible to construct long-term quantum memory.
We name the world represented by this belief "Quamnesia": the world where quantum computers suffer from amnesia in the sense that they cannot have long-term quantum memory. 
Quantum optimists agree we currently live in Quamnesia, but believe we will eventually pave a path to Superiorita.
We name those skeptics who think we will forever be trapped in Quamnesia \emph{Quamnesians}.

The fact that we \emph{currently} live in Quamnesia poses a problem for the applicability of quantum algorithms. We offer a  Panglossian\footnote{Dr. Pangloss was the pedantic old tutor in Voltaire's satirical novel Candide. Pangloss was an optimist who claimed that "all is for the best in this best of all possible worlds."} view on this subject in which this perceived difficulty of Quamnesia is actually a blessing in disguise: the inability of parties to store quantum information results in exotic cryptographic primitives known to be impossible even in the quantum standard interactive model.

 Say any quantum state decoheres after $t$ seconds. A party could thus choose to "stall" other parties in a protocol for $t$ seconds before continuing communication. This stalling ensures that nothing but classical mementos are left of any quantum state previously kept by the opposing parties. Of course, the same must apply to the party that chose to stall: following the stall, any quantum state the after-mentioned party has previously held decoheres. 
 
 This aspect of Quamnesia is captured by a special case of the bounded quantum storage model \cite{Sch07, DFSS08}. In the bounded quantum storage model, parties send quantum messages to each other in turns, but a limitation exists on the quantum memory kept by the parties; in some specified stages of the protocol, all but $\gamma$ quantum qubits of every party, are measured in the standard basis. If $\gamma=0$, the model is better described as "no quantum memory," as none at all can be kept---coinciding with the above description of protocols in Quamnesia.

\paragraph{Unconditional Commitment and Oblivious Transfer in Quamnesia}
Quamnesia allows for a unique and exotic bit commitment scheme, first described in \cite{DFSS08}:
In the commit phase, Bob sends a randomly generated tensor product of BB84 qubits(qubits of the set  $\set{\ket{0},\ket{1},\ket{+},\ket{-}}$).
Bob then stalls\footnote{It is convenient to think of the act of stalling as a special message \Stall that the parties can send each other during communication, notifying that no further transmission will be received until $t$ time passes, forcing the parties to measure. We will use this convention in the description of algorithms. } to ensure that the state has decohered, forcing Alice to measure the quantum state sent to her before the decoherence. Alice measures the state she received in the standard basis to commit to $0$, and in the Hadamard basis to commit to $1$. In stark contrast to commitment schemes in the standard model, no data is sent from the committer to the receiver in the commit phase.

In the reveal phase, Alice sends her measurement results and the committed bit. Bob confirms that the measurement results are consistent with the state he sent on the corresponding basis. It is easy to see that the scheme is perfectly hiding since no information is sent to Bob during the commit phase. Ref~\cite[Theorem 7.6, Theorem 7.7]{Sch07} shows a strong result regarding binding in the quantum bounded storage model when $\gamma$ is up to a fourth of the number of qubits sent. We now provide a proof sketch for the case $\gamma=0$, based on the similarity of the commitment to the classically verifiable variant of Wiesner's money.

\begin{theorem}[Binding of $\AmCom$, informal] In Quamnesia, the scheme $\AmCom$ (\cref{alg: Amnesic commitment}) is statistically binding.
\end{theorem}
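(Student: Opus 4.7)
The plan is to reduce statistical binding of $\AmCom$ to the hardness, underlying Wiesner's scheme, of simultaneously predicting both standard- and Hadamard-basis measurement outcomes of a random BB84 state. I will adopt the standard sum-binding formulation: letting $P_b$ denote the probability that Bob's verifier accepts an opening to bit $b$ produced by a cheating committer $\Radv^*$, it suffices to prove $P_0 + P_1 \leq 1 + \mathrm{negl}(n)$.

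The first step is to exploit Quamnesia to collapse the adversary to a classical residue. After Bob sends \Stall in the commit phase, $\Radv^*$ retains no quantum register; her state is the outcome $\sigma$ of some POVM that she has applied to Bob's $n$-qubit BB84 register. Her opening is then a (randomized) classical function of $\sigma$, so without loss of generality I may replace $\Radv^*$ by a ``two-opening'' cheater that computes both candidate openings $(y_0, 0)$ and $(y_1, 1)$ from the very same $\sigma$. Letting $V_b$ denote the event that Bob accepts $(y_b, b)$, inclusion--exclusion gives $\Pr[V_0 \cap V_1] \geq P_0 + P_1 - 1$, so it suffices to bound $\Pr[V_0 \cap V_1]$ by a negligible function of $n$.

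The event $V_0 \cap V_1$ asks that $y_0$ agree with Bob's hidden string $x$ on every position with $\theta_i = 0$, and $y_1$ agree with $x$ on every position with $\theta_i = 1$. Put differently, $\Radv^*$ must produce, from one copy of $\bigotimes_i \ket{x_i}_{\theta_i}$ with uniform $x,\theta$, classical labels that are simultaneously correct for both the standard-basis and the Hadamard-basis measurement, position by position---precisely the security game underlying the classically verified variant of Wiesner's money. A direct single-qubit POVM analysis identifies the Breidbart measurement as the optimal attack, achieving per-qubit success probability $\cos^2(\pi/8) = \MOEExp$. Since the BB84 register is an independent tensor product and the winning predicate factors over positions, a parallel repetition argument for this Wiesner-type monogamy game tensorizes the bound to $\Pr[V_0 \cap V_1] \leq \MOEExp^n$, which is negligible in $n$.

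The principal technical obstacle is ruling out that a joint entangling measurement on all $n$ qubits outperforms $n$ independent Breidbart measurements. Because both the input register and the winning predicate decompose into iid single-qubit instances, the bound should follow from an induction that peels off one qubit per step (tracing out a qubit and conditioning on its measurement outcome, then invoking the single-qubit optimum on the remainder); alternatively, one may invoke the monogamy-of-entanglement analysis underlying the bounded-storage result of Schaffner cited above, specialized to the Quamnesian case $\gamma = 0$.
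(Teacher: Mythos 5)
Your proposal takes essentially the same route as the paper's proof sketch: Quamnesia reduces the cheating committer to a classical residue from which both openings can be computed, and breaking binding then corresponds, qubit by qubit, to answering both basis challenges in the classically verifiable variant of Wiesner's money \cite{MVW12}, with per-qubit optimum $\cos^2(\pi/8)$ and an exponentially small bound for $n$ qubits; your sum-binding plus inclusion--exclusion framing just makes this correspondence slightly more explicit. The one caveat is that the tensorization step is properly justified by the second fallback you name---the monogamy-of-entanglement/SDP-style analysis \cite{TFKW13} (the paper invokes semidefinite programming \`a la \cite{MVW12}, and also offers an alternative generic argument via tokenized MACs \cite{BSS21})---rather than by the qubit-peeling induction, which is not obviously sound against entangled joint measurements.
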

\begin{proof}[Proof sketch]
In the classically verifiable variant of Wiesner's money \cite{MVW12}, the quantum money states minted to users are BB84 states. Verification is done by sending challenges to the users, requesting the user to measure each qubit in either the standard or Hadamard basis. In the case of a single qubit, opening a commitment to both $0$ and $1$ corresponds to being able to respond to both distinct challenges from the bank on a qubit of the money scheme. A bound on the probability to do so is shown to be $\cos^2{\frac{\pi}{8}}$. For the many qubits case, it can be argued (via semi-definite programming, for example) that the probability of opening a commitment to both $0$ and $1$ decreases exponentially.
Another argument for the binding can be provided by observing an alternate description for the commitment: the scheme effectively consists of the receiver providing the committer with the quantum signature token in the \cite{BSS21} $\Tmac$ (tokenized $\mac$) scheme $\CTMAC$, and the committer then uses that token to sign on the bit it wishes to commit to,  creating a signature that the receiver can verify in the reveal stage, which happens after decoherence.
The security guarantees of $\Tmac$ state that the committer could not have created a signature for both $0$ and $1$ with significant probability, and yet she must have done so to break the binding; the committer can be assumed to have already broken the binding before the reveal stage, due to the promised decoherence. This alternative viewpoint on the commitment scheme has the added benefit of being generic: the same argument would work for any $\Tmac$ scheme satisfying information theoretic unforgeability for a single document.
\end{proof}
Similarly, as a consequence of strong known results in the quantum bounded storage model \cite[Theorem 6.4, Theorem 6.3]{Sch07}, there exists in Quamnesia a 1-out-of-2 oblivious transfer protocol, which has perfect security against malicious senders, and statistical security against malicious receivers.
Intuitively,  oblivious transfer is a primitive allowing a sender holding two messages, $m_0,m_1$, and a receiver holding a bit $\OTbit$, to communicate in such a way that the receiver learns $m_\OTbit$ but learns nothing about $m_{1-\OTbit}$. On the other hand, the sender learns nothing about $\OTbit$. 
As discussed in \cref{sec: MOE based security}, defining the security of oblivious transfer properly in an information-theoretic setting is a more contrived task than initially seems.
However, it is known that it suffices to construct a primitive called random oblivious transfer, where the sender has no input, but instead, random "inputs" are created during the protocol in such a manner that the receiver could only learn one of them. 
The Quamnesia random-OT scheme is described in \cref{alg: Amnesic OT}.

 It can easily be seen that the scheme is perfectly secure against malicious senders, as no data is sent from the receiver to the sender.
 Security against malicious receivers is more intricate and is detailed in \cite{Sch07}. 
We provide in \cref{sec: MOE based security} an alternative proof to that of \cite{Sch07} regarding security against malicious receivers based on the monogamy of entanglement property \cite{TFKW13}. For simplicity's sake, we prove a weaker security notion than that in \cite{Sch07}. However, by altering the proof's last stages, the same result as in \cite{Sch07} can be achieved (see \cref{rem: proof strengthening}).

\begin{algorithm*}
    \caption{Bit-Commitment Scheme $\AmCom$}
    \label{alg: Amnesic commitment}
\procedureblock {}{%
  \Radv(1^\secpar) \< \< \Cadv(1^\secpar,\CommBit) \pclb
 \pcintertext[dotted]{Commit Phase}\\
 \pcln \begin{minipage}[t]{12em}\raggedright
    $a \sample\Bitspace^\secpar$,$\theta\sample \Bitspace^\secpar$ ,\linebreak
      $\ket{\Stamp} =H^\theta\ket{a}$.
\end{minipage}\<\<\\
\pcln\< \sendmessageright*{\ket{\Stamp},\Stall} \<  \\   
\pcln\<\< \begin{minipage}[t]{10em}\raggedright
     Measure ${(H^\CommBit)}^{\tensor\secpar}\ket{\Stamp}$ \linebreak
     to obtain $\sigma$.
\end{minipage}\pclb
\pcintertext[dotted]{Reveal Phase}\\
\pcln \< \sendmessageleft*{\CommBit, \sigma} \< \\
\pcln \begin{minipage}[t]{12em}\raggedright
     If $\sigma$ is consistent\linebreak with $a$ on $\set{i|\theta_i=\CommBit}$,\linebreak
     output $\CommBit$, otherwise, abort.
\end{minipage}
\< \< }
\end{algorithm*}

\begin{algorithm*}
    \caption{Length $\ell$ Random Oblivious Transfer Scheme $\AmOT^\ell$ }
    \label{alg: Amnesic OT}
  
    \procedureblock {}{%
 \Sadv(\secparam)  \< \< \Radv(\secparam,\OTbit) \< \\
\pcln 
\begin{minipage}[t]{15em}\raggedright
\text{$\theta\sample \Bitspace^\secpar,x\sample\Bitspace^\secpar$},\linebreak 
$\ket{\Stamp}=H^\theta\ket{x}$.
\end{minipage}
 \<\<\< \\\label{stp: sadv 1}
\pcln 
 \<\sendmessageright*{\ket{\Stamp},\Stall} \< \< \\  
 \pcln\< \< \begin{minipage}[t]{15em}\raggedright
     Measure ${(H^\OTbit)}^{\tensor \secpar}\ket{\Stamp}$ \linebreak
     to obtain $\sigma$.
\end{minipage}   \< \\ \label{stp: radv 1}
\pcln 
\begin{minipage}[t]{15em}\raggedright
Let $\mathcal H$ be a universal hash family from $\Bitspace^{\leq \secpar}$ to  $\Bitspace^{\ell}$.
\begin{algorithmic}
\For{$c\in \Bitspace$}
\State  \text{Sample  a function $h_c\in \mathcal{H}$},
\State $\Cons^\theta_{c}\equiv\{i\in [\secpar]|\theta_i=c\}$,
\State $x_{c}\equiv x_{|{ \Cons^{\theta}_{c}}}$,
\State $m_{c}=h_c(x_{c})$.
\EndFor
\end{algorithmic}
\end{minipage}
 \<\<\< \\ \label{stp: sadv 2}
 \pcln \< \sendmessageright*{\langle h_0 \rangle ,\langle h_1 \rangle ,\theta} \<\< \\ 
 \pcln 
 \<\<  
\begin{minipage}[t]{15em}\raggedright 
\text{$\Cons^\theta_{\OTbit}\equiv\{i\in [\secpar]|\theta_i=\OTbit\}$}, \linebreak
$r\equiv h_\OTbit(\sigma_{|{ \Cons^{\theta}_{\OTbit}}})$.
\end{minipage}
 \< \\
 \label{stp: radv 2}\pcln \text{Output } (m_0,m_1).\<\< \text{Output  } r.
}
 We use $\langle h\rangle$ to refer to the description of a function $h\in \mathcal{H}$.
\end{algorithm*}

Both of these results are provably impossible to achieve in the standard quantum interactive model. A quantum bit-commitment that is both statistically hiding and statistically binding is known to be impossible in itself, even without perfect hiding. See \cite{May97, LC98} for impossibility results regarding commitments and \cite{GKR08, BGS13} for impossibility results on ideal oblivious transfer, also known as one-time memories.

By combining  the now classic reduction detailed in \cite{GMW86} with the existence of the commitment scheme $\AmCom$, the following result also holds in Quamnesia:
\begin{corollary}
In Quamnesia, there exist perfect quantum zero-knowledge proofs for $\NP$.
\end{corollary}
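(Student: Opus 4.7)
The plan is to instantiate the classical GMW zero-knowledge protocol \cite{GMW86} for Graph $3$-Coloring (an $\NP$-complete language) with $\AmCom$ in place of its underlying bit commitment, and to repeat the protocol sequentially. In each round the prover commits, bit by bit, to a freshly sampled random relabelling of a valid $3$-coloring; the verifier then picks a uniformly random edge $(u,v)\in E$; the prover opens the two commitments corresponding to the colors of $u$ and $v$; and the verifier accepts iff both openings succeed and the revealed colors are distinct elements of $\{1,2,3\}$.

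Completeness will be immediate from the honest-execution correctness of $\AmCom$. For soundness, I would argue that in Quamnesia the $\Stall$ inside every commit phase forces any cheating $\QPT$ prover to commit to well-defined values: by the binding theorem for $\AmCom$ stated above, no commitment admits two distinct openings except with negligible probability. Conditioned on this event the ``implicit'' coloring is fixed before the challenge edge is revealed, so if the graph is not $3$-colorable at least one edge is improperly colored and the verifier catches it with probability at least $1/|E|$; sequential repetition then drives the overall soundness error below any negligible function of $\secpar$.

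For perfect zero-knowledge against an arbitrary $\QPT$ verifier $V^*$, I would use the standard guess-and-rewind simulator. At the start of each round it samples an edge $(u^*,v^*)$ uniformly from $E$ together with two distinct colors $c_{u^*}\neq c_{v^*}$; for the commitments at $u^*$ and $v^*$ it measures the BB84 registers received from $V^*$ in the bases dictated by $c_{u^*},c_{v^*}$, and for every other vertex it measures in an arbitrary fixed basis. It then lets $V^*$ issue its challenge edge, opens honestly if $V^*$ asked for $(u^*,v^*)$, and otherwise rewinds $V^*$ to the beginning of the round and retries. Because $\AmCom$ transmits \emph{no} committer-to-receiver message during the commit phase, the verifier's view up to the challenge is a function of its own randomness alone and is therefore identically distributed in the real and simulated executions; hence the guess succeeds with probability exactly $1/|E|$, and conditioned on success the revealed colors match the distribution induced by a uniformly random permutation of a true coloring. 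Expected-polynomial-time simulation with perfect equality of output distributions follows, yielding perfect zero-knowledge per round, which is preserved by sequential composition.

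The step I expect to be the main obstacle is the rewinding of $V^*$: for general quantum verifiers one must invoke Watrous's quantum rewinding lemma, which in full generality only yields \emph{computational} zero-knowledge and requires delicate ``oblivious'' simulation. In Quamnesia this difficulty essentially disappears, because the mandatory $\Stall$ inside every $\AmCom$ invocation erases any quantum memory $V^*$ might attempt to carry across rounds; at every rewinding point $V^*$'s state is purely classical and can be cloned and restarted at will, so classical rewinding suffices and the perfect-ZK simulation described above goes through without modification.
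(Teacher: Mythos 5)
Your proposal matches the paper's (much terser) argument: the corollary is obtained exactly by instantiating the classic GMW reduction for an $\NP$-complete language with $\AmCom$ as the commitment, with perfectness of the zero-knowledge coming from $\AmCom$'s perfect hiding (no committer-to-receiver message in the commit phase). Your added detail on soundness via statistical binding and on why rewinding is benign in Quamnesia goes beyond what the paper spells out but is consistent with its intended argument.
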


  The perfectness is due to the perfect hiding of $\AmCom$. Such a result is not known in the standard interactive classical or quantum models, even if statistical zero-knowledge proofs are considered instead. Interestingly, the corresponding complexity classes of decision problems for which there exists statistical zero-knowledge proofs possess some interesting qualities in the standard classic or quantum interactive models. Amongst those closure under complement (\cite{oka96, Wat02}) and the equivalence of honest and dishonest verification \cite{GSV98,wat09}, which additionally holds even in the computational setting (\cite{Vad06,Kob08}). We find it interesting to ponder whether analogs hold for the appropriate class in Quamnesia, either in computational, statistical, or perfect settings. A critical step appears to be finding a complete problem for the appropriate classes in Quamnesia.

Of similar interest is that the class of problems provable by computational zero-knowledge proofs contains $\IP$ if a non-interactive bit-commitment with statistical hiding exists \cite{BGGHKMR88}. 
While bit-commitment schemes exist in Quamnesia, the proof in \cite{BGGHKMR88} does not immediately transfer to Quamnesia. The proof has many stages, of which non are quite trivial. We highlight one particular challenge: the main idea of the proof is that in each round, the prover commits to messages of some interaction that would convince the verifier that she would accept. The prover then uses zero-knowledge for $\NP$ to convince the verifier that these commitments can indeed be opened to a valid interaction which results in acceptance. This approach does not work for the particular commitment presented in this paper. Since no data is sent from the committer to the receiver, the zero-knowledge proof only convinces the receiver of a trivial $\NP$ relation, which is the same for all messages.

Quamnesia additionally allows the construction of an almost ideal coin-flipping scheme. 
Originally introduced by \cite{Blu83}, a coin-flipping scheme allows Alice and Bob to jointly sample a common bit, which ideally should be sampled uniformly. At the end of the interaction, Alice and Bob respectively output a value  $\CoinBit_\adv, \CoinBit_\Badv \in \set{0, 1, \abort}$. If $\CoinBit_\adv=\CoinBit_\Badv$, then the protocol's output is $\CoinBit=\CoinBit_\adv$. If the values differ, then the protocol outputs $\CoinBit=\text{\abort}$. If Alice and Bob are both honest,  $\CoinBit$ distributes uniformly over $\Bitspace$. Let $P^*_\adv$ denote the supremum of $\max\{\Pr [\CoinBit = 0], \Pr [\CoinBit = 1]\}$ over all strategies of (a dishonest) Alice when Bob is honest, and let $\epsilon_\adv=P^*_{\adv}-\frac{1}{2}$ be Alice's Bias. Similarly, let $P^*_\Badv$ denote the supremum of $\max\{\Pr [\CoinBit = 0], \Pr [\CoinBit = 1]\}$ over all strategies of (a dishonest) Bob when Alice is honest, and let $\epsilon_\Badv=P^*_{\Badv}-\frac{1}{2}$ be Bob's Bias.

\begin{corollary}
In Quamnesia, there exist a strong \footnote{ Strong coin-flipping contrasts with weak coin-flipping by the goals of the parties: in weak coin-flipping, Alice hopes to flip 0, and Bob hopes to flip 1; hence Bob being able to force the result 0 with probability $\frac{3}{4}$ would only be detrimental to himself, and thus would not be considered cheating; in strong coin-flipping, any bias may be beneficial to a cheating party, and hence any non-uniform distribution of the coin is considered cheating.} coin-flipping scheme  \cite{Kit02,CK09}, detailed in \cref{alg: amnesic coin-flipping}, which is one-sided perfect, and has arbitrarily small bias for the other side. 
\end{corollary}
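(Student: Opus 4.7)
The plan is to plug the $\AmCom$ scheme into the standard Blum commitment-based coin-flip \cite{Blu83}. Concretely: Alice samples $c_A \sample \Bitspace$ and commits to it using $\AmCom$ as the committer; Bob samples and sends $c_B \sample \Bitspace$; Alice opens the commitment, revealing $c_A$; if Bob's verification accepts, both parties output $c_A \oplus c_B$, and otherwise they output $\abort$. Honest correctness is immediate: $c_A$ and $c_B$ are independent and uniform, so $c_A \oplus c_B$ is uniform on $\Bitspace$.

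To see that Bob's side is perfect, I would appeal directly to the perfect hiding of $\AmCom$: during the commit phase no data whatsoever is transmitted from Alice to Bob, so Bob's entire view at the moment he selects $c_B$ depends only on his own randomness and the BB84 qubits he himself produced, and is therefore statistically independent of $c_A$. Conditioned on any such view, $c_A$ remains uniform, and so $c_A \oplus c_B$ is uniform. Since this holds for every (possibly aborting) Bob strategy, $P^*_\Badv = 1/2$ and $\epsilon_\Badv = 0$.

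For the bound on Alice's bias, I would invoke the statistical binding theorem established earlier. Given any committer strategy of Alice, let $p_b(\secpar)$ denote her maximal probability of producing a reveal that Bob accepts as the bit $b$; the binding analysis (either via the quantum-money SDP or the $\CTMAC$ unforgeability viewpoint) yields the sum-binding $p_0(\secpar) + p_1(\secpar) \leq 1 + \epsilon(\secpar)$ with $\epsilon(\secpar) = 2^{-\Omega(\secpar)}$. To bias the output towards a target $b^\star$, Alice must, upon receiving $c_B$, open as $b^\star \oplus c_B$; averaging over uniform $c_B$ her success probability is at most $\frac{1}{2}(p_0(\secpar) + p_1(\secpar)) \leq \frac{1}{2} + \frac{\epsilon(\secpar)}{2}$, so $\epsilon_\adv \leq \epsilon(\secpar)/2$, which can be driven below any desired $\delta > 0$ by choosing $\secpar$ sufficiently large. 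Aborting never helps Alice, since $P^*_\adv$ is defined as the maximum of $\Pr[\CoinBit=0]$ and $\Pr[\CoinBit=1]$ absolutely.

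The main obstacle is extracting a sum-binding bound from the informal binding theorem: the sketched arguments control Alice's ability to simultaneously open to both bits, whereas the Blum reduction needs a bound on the sum $p_0 + p_1$ against an opener whose choice of bit may depend adaptively on her full post-stall state. The cleanest fix is to phrase the binding as ``there exists a classical random variable $B$, determined by Alice's state immediately after stall, such that $\Pr[\text{Alice is accepted for opening } 1 - B] \leq \epsilon(\secpar)$,'' from which sum-binding follows by conditioning on $B$; both the quantum-money and $\Tmac$-based analyses in fact establish this stronger statement.
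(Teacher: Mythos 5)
Your proposal is correct and is essentially the paper's own argument: the paper obtains the corollary by instantiating the standard commitment-based coin-flip of \cite{BC90} (the scheme in \cref{alg: amnesic coin-flipping}) with $\AmCom$, getting $\epsilon_\Badv=0$ from perfect hiding and negligible $\epsilon_\adv$ from statistical binding, exactly as you do. Your closing remark about upgrading the informal binding statement to a sum-binding (or ``existence of a classical committed bit $B$'') bound is a fair observation about a detail the paper leaves implicit, but it does not change the route.
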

The scheme itself, detailed in \cref{alg: amnesic coin-flipping}, is the result of a well known construction of coin-flipping via commitments~\cite{BC90}. When instantiated with \cref{alg: Amnesic commitment}, it results in a coin-flipping scheme where  $\epsilon_\Badv=0$ and $\epsilon_\adv$ is negligibly small in the security parameter of $\AmCom$.
Negligibly small biases are known to be impossible even in the quantum standard interactive model \cite{LC98,Kit02} and are in clear contrast to a result by Kitaev:
\begin{theorem}[{\cite{Kit02}}] 
\label{thm: coin-flipping bounds}
 In any coin-flipping scheme, $P^*_\adv P^*_{\Badv}\geq \frac{1}{2}$. In particular, either $\epsilon_\adv\geq \frac{1}{\sqrt{2}}-\frac{1}{2}$, or $\epsilon_\Badv\geq \frac{1}{\sqrt{2}}-\frac{1}{2}$; and if $\epsilon_\Badv=0$, then  $\epsilon_\adv=\frac{1}{2}$. 
\end{theorem}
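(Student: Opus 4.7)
The plan is to prove the product bound $P^*_\adv P^*_\Badv \geq 1/2$ by Kitaev's original argument via purification and Uhlmann's theorem. Without loss of generality, purify the protocol so that all local operations are unitaries and the only measurements are the two final output measurements, performed by Alice and Bob on designated registers of their own. This reduction does not change cheating probabilities, since a coherent adversary can always simulate any mixed strategy by adjoining an ancilla. After purification, let $|\psi\rangle \in \mathcal{H}_A \otimes \mathcal{H}_B$ be the joint state at the end of the honest execution, with output projectors $\Pi_c^\adv, \Pi_c^\Badv$ for $c \in \{0,1\}$; correctness of the honest scheme forces $\| (\Pi_c^\adv \otimes \Pi_c^\Badv) |\psi\rangle \|^2 = 1/2$ and orthogonality across $c$.

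The key characterization, obtained by applying Stinespring's dilation and Uhlmann's theorem round by round, is: the set of global pure states that a dishonest Alice (against honest Bob) can force at the end of the interaction is precisely the set of purifications of Bob's honest marginal $\rho_B = \Tr_A |\psi\rangle\langle\psi|$, and symmetrically for Bob. Given this, fix $c \in \{0,1\}$. Applying Uhlmann's theorem once more, Alice's optimal probability of forcing outcome $c$ equals the squared fidelity between $\rho_B$ and Bob's marginal $\sigma_B^c$ of the normalized projected state $(\Pi_c^\adv \otimes I)|\psi\rangle/\sqrt{1/2}$; symmetrically for Bob. Since $|\psi\rangle$ is a common purification of $\rho_A, \rho_B$, and the projected state $(\Pi_c^\adv \otimes \Pi_c^\Badv)|\psi\rangle/\sqrt{1/2}$ is simultaneously a purification of $\sigma_A^c$ and $\sigma_B^c$, composing these fidelities yields $P_\adv^{*,c} \cdot P_\Badv^{*,c} \geq \langle \psi | \Pi_c^\adv \otimes \Pi_c^\Badv | \psi \rangle = 1/2$. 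The product bound $P^*_\adv P^*_\Badv \geq 1/2$ then follows by choosing for each party the better of $c = 0, 1$.

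The two numerical consequences are elementary. Combining $P^*_\adv P^*_\Badv \geq 1/2$ with $P^*_\adv, P^*_\Badv \leq 1$ forces $\max(P^*_\adv, P^*_\Badv) \geq 1/\sqrt{2}$, giving $\max(\epsilon_\adv, \epsilon_\Badv) \geq 1/\sqrt{2} - 1/2$; and $\epsilon_\Badv = 0$ implies $P^*_\Badv = 1/2$, so $P^*_\adv \geq 1$, i.e., $\epsilon_\adv = 1/2$. The main obstacle I expect to hit is the \emph{reachable states equal purifications of the honest marginal} characterization over many rounds: one has to keep careful track of which party currently holds the message register at each round, and verify that the Uhlmann freedom at each step is exactly what a coherent cheater has access to, no more and no less. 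This is essentially the same technical heart as in the Lo--Chau / Mayers no-go for quantum bit commitment, and the rest of the argument is a relatively short chain of fidelity manipulations once the characterization is in place.
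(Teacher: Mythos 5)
The paper does not prove this statement---it imports it from \cite{Kit02}---so the comparison is against Kitaev's argument, and your proposal does not reconstruct it: there is a genuine gap, located exactly at the step you flagged, and it is not repairable within your framework. The characterization ``the states a dishonest Alice can force are \emph{precisely} the purifications of Bob's honest marginal $\rho_B$'' is false: a cheating Alice may deviate in every round, so Bob's final reduced state need not equal $\rho_B$ at all; only the inclusion ``every purification of $\rho_B$ is reachable'' is true (play honestly, then rotate locally). Moreover, if the ``precisely'' version were used to \emph{equate} $P^{*,c}_\adv$ with a fidelity, it would imply Alice gains nothing over honest play---a final local unitary commutes with $I\otimes\Pi^{\Badv}_c$, so it cannot change the probability that honest Bob outputs $c$---contradicting, e.g., simple protocols with bias $1/4$ and the known near-tightness of Kitaev's bound.

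Even using only the correct inclusion, your closing composition does not yield $1/2$. Uhlmann gives $P^{*,c}_\adv \ge F(\rho_B,\sigma^c_B)^2$ and $P^{*,c}_\Badv \ge F(\rho_A,\sigma^c_A)^2$, and each fidelity is at least $\sqrt{p_c}$ because $\ket{\psi}$ and the projected state are particular purifications; but this only gives $P^{*}_\adv P^{*}_\Badv \ge p_c^2 = 1/4$, the trivial bound. The inequality you actually need, $F(\rho_B,\sigma^c_B)^2\, F(\rho_A,\sigma^c_A)^2 \ge p_c$, fails for essentially every protocol: at the end both parties hold a classical record of the outcome, so $\sigma^0_B$ and $\sigma^1_B$ have orthogonal supports and $\rho_B=\tfrac12\sigma^0_B+\tfrac12\sigma^1_B$, whence $F(\rho_B,\sigma^c_B)^2=\tfrac12$ exactly, and likewise on Alice's side, making the product exactly $1/4$. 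In other words, the single end-of-protocol Uhlmann step (the Mayers--Lo--Chau move) captures no cheating power beyond honest play and cannot prove the theorem; all the strength of the bound comes from deviations \emph{during} the interaction. Kitaev's proof handles this by expressing each party's optimal probability of forcing outcome $c$ as a semidefinite program over the honest party's intermediate states, taking dual feasible operators $Z^{(j)}_A, Z^{(j)}_B$ for the two programs, and showing that $\bra{\psi_j} Z^{(j)}_A\otimes Z^{(j)}_B\ket{\psi_j}$ is monotone along the honest execution, which gives $P^{*,c}_\adv P^{*,c}_\Badv \ge p_c$ for each $c$; some such round-by-round certificate (SDP duality, or the Gutoski--Watrous/point-game reformulation) is unavoidable. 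Your final numerical deductions from the product bound ($\max$ bias $\ge \tfrac{1}{\sqrt2}-\tfrac12$, and $\epsilon_\Badv=0 \Rightarrow \epsilon_\adv=\tfrac12$) are correct as stated.
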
 
\begin{remark}
Note that the last statement proves that it is impossible to have a non-trivial one-sided perfect security coin-flipping in the standard quantum interactive model.
\end{remark}

\begin{algorithm*}
    \caption{Ideal Coin-Flipping $\Amflip$ , based on the bit-commitment scheme $\AmCom$ }
\label{alg: amnesic coin-flipping}

\procedureblock { }{%
  \adv(1^\secpar) \< \< \Badv(1^\secpar) \< \\
 \pcln a\sample \Bitspace.  \\
\pcln\< \sendmessagerightleft*{
\begin{aligned}
 &\text{Using $\AmCom$, $\adv$ interacts}\\ &\text{with $\Badv$ to commit to $a$}
\end{aligned}
} \< \< \\   
 \pcln\< \<  b\sample \Bitspace \< \\
\pcln \< \sendmessageleft*{b} \<\< \\
\pcln\< \sendmessagerightleft*{
\begin{aligned}
&\text{Using $\AmCom$, $\adv$ interacts}\\ &\text{with $\Badv$ to reveal $a$}
\end{aligned}} \< \< \\   
\pcln \< \< 
\begin{minipage}[t]{20em}\raggedright
    Let the outcome of the reveal be $a'$.\linebreak
    If the reveal failed, $\CoinBit_{\Badv}=\abort$,\linebreak
    otherwise,  $\CoinBit_{\Badv}=a' \oplus b$.
\end{minipage}
\<\\
\pcln  \text{ Output $\CoinBit_{\adv}=a \oplus b$.}\< \< \text{ Output $\CoinBit_{\Badv}$,}}
\end{algorithm*}

\paragraph{Skepticland vs. Quamnesia}
\setlength{\epigraphwidth}{0.9\textwidth}
\epigraph{All quantum optimists are alike; each quantum skeptic is skeptic in their own way.}{\textit{---Gil Kalai} (private communication), paraphrasing the first line of Tolstoy's  Anna Karenina.}

As Kalai's quote above states, it is notoriously hard to characterize quantum skeptics. In this section, we discuss the relationship between Quamnesians---those that believe that quantum computers cannot store quantum states for long periods---and quantum skeptics. We argue that Quamnesians constitute a dominant faction of quantum skepticism; we also present notable other arguments in the other faction, that is,  other quantum skeptic arguments unrelated to Quamnesia.

The existence of efficient universal quantum error-correction for large-scale systems, and by extension, long-term quantum memory, is considered vital to quantum computation by quantum optimists and skeptics alike \cite[Chapter 1]{Dya20}.
It is generally agreed that quantum long-term memory requires  efficient universal quantum error-correction.
As a consequence, we believe that common skeptic views such as the following do support the notion that quantum computers will forever remain "amnesic":
\begin{displayquote}[{\cite[Section 1]{Kal20}}]
\textit{Noisy quantum systems will not allow building quantum error-correcting codes needed for quantum computation.}
\end{displayquote}

We argue Dyakonov---another prominent quantum skeptic---is a Quamnesian: he claims~\cite[Chapter 4]{Dya20} that the assumptions in the threshold theorem~\cite{AB08}, which are also invaluable to quantum error-correction, are not well justified.

 However, we emphasize that not all quantum skeptics necessarily believe we live in Quamnesia: there are other objections to quantum supremacy. Some skeptics may claim that only general-purpose fault-tolerant quantum computation is impossible, meaning quantum long-term memory can be stored, but it will not be possible to perform universal computation on it without significant errors. We note though that if quantum memory is available, provided that the technology of storage still allows the relatively simple measurements and unitaries that can be performed today, then significant advancements in quantum cryptography are implied still; even being able to reliably store, prepare and perform simple measurements on the simplest of states already implies the practical application of more quantum cryptographic primitives that do not exist in a classical world. Amongst those primitives are private quantum money~\cite{Wie83, PYJ+12, MVW12}, tokens for $\mac$~\cite{BSS21}, uncloneable encryption~\cite{BL20, AK21}, and quantum encryption with certified deletion~\cite{BI20}, as they all have constructions that are realistic except for the need to store BB84 states.
Perhaps, one could contemplate a device that can reliably store simple states, such as the BB84 states, but \emph{not} more general (e.g., entangled) states. However, this is impossible: if a quantum device preserves BB84 states (or any other orthogonal basis for the vector space of $2^n\times 2^n$ density matrices) without significant errors, then by linearity, that device can be represented as a quantum channel that is close to the identity, and hence would preserve any state whatsoever. Therefore, as long as there is a way to load entangled states to the device, these states would be stored as reliably as the BB84 states.
A different form of skepticism is \cite{Ber18}. Bernstein does not make any speculation about future technology but instead attacks the security claims of current quantum protocols. Bernstein claims that QKD could be \emph{incompatible}
with the laws of physics. One of his main arguments is that by the holographic principle, there is no way for Alice (or Bob) to keep any secrets from Eve. 
Other motivations for quantum skepticism stem from a more practical real world perspective, for example \cite{Var19b}, which argues that quantum computing would not be realized in practice because the small number of applications lacks the "positive feedback loop" of successful algorithms which has driven classical computing forwards. Aaronson \cite{Aar13b}, too, lists some common arguments of  skeptics---along with his counterarguments.

\ifnum\masterthesis=0
    \subsection*{Acknowledgments}
\fi

\ifnum\anon=0
We wish to thank Amos Beimel, Anne Broadbent, Amit Behera, and Gil Kalai for valuable discussions. This work was supported by the Israeli Science Foundation (ISF) grant No. 682/18 and 2137/19, and by the Cyber Security Research Center at Ben-Gurion University.

\BeforeBeginEnvironment{wrapfigure}{\setlength{\intextsep}{0pt}}
\begin{wrapfigure}{r}{90px}
    \includegraphics[width=40px]{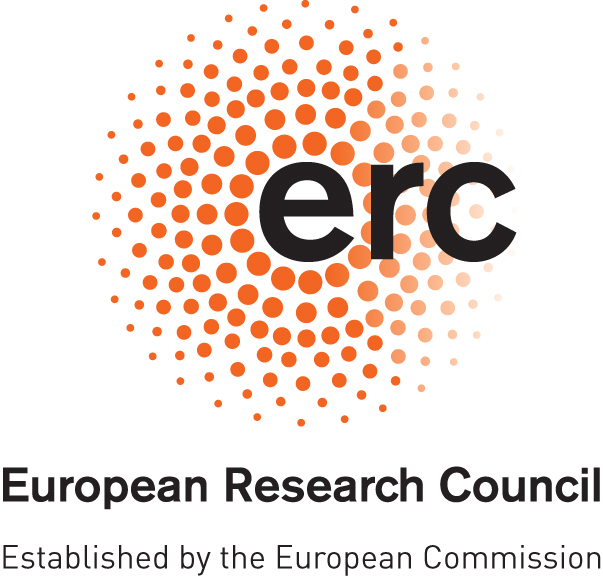}  \includegraphics[width=40px]{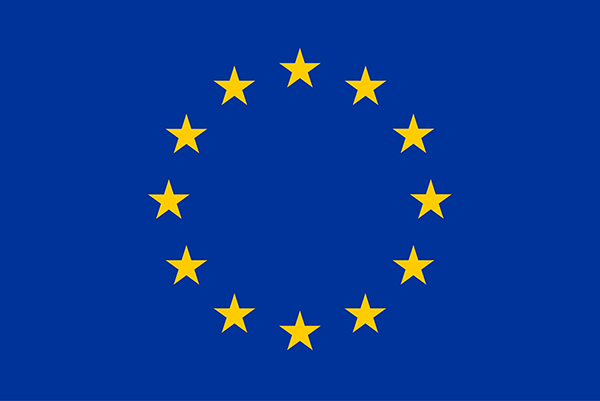}
\end{wrapfigure}This work has received funding from the European Research Council (ERC) under the European Union’s Horizon 2020 research and innovation programme (grant agreement No 756482).

\fi

\ifnum\sigconf=1
    \bibliographystyle{ACM-Reference-Format}
\else
    \ifnum\cryptology=1
        \bibliographystyle{abbrv}
    \else
        \bibliographystyle{alphaabbrurldoieprint}

    \fi
\fi

\ifnum\masterthesis=0
    \ifnum\smallbib=1
        {\footnotesize \bibliography{main} }
    \else 
        \bibliography{main}
    \fi
\fi

\appendix
\ifnum\shownomenclature=1
\printnomenclature[1in]
\fi
\ifnum\masterthesis=1
    \ifnum\smallbib=1
        {\footnotesize \bibliography{main} }
    \else 
        \bibliography{main}
    \fi
\fi

\section{Security of Oblivious Transfer in Quamnesia} \label{sec: MOE based security}
Oblivious transfer (\OT) is a primitive allowing a sender holding two uniformly random messages of length $\ell$,  $m_0,m_1$, and a receiver holding a bit $\OTbit$ to communicate in such a way that the receiver learns $m_\OTbit$, but learns nothing of $m_{1-\OTbit}$, and on the other hand, the sender learns nothing of $\OTbit$. A different primitive, called   Random$\textit{-}\OT$, implies $\OT$. In Random$\textit{-}\OT$, the "inputs" of the sender are created "on the fly," and yet the receiver still can know one---and only one--of the two "inputs," while the sender does not know which of the two the receiver knows. To get a true $\OT$, the sender can perform the random $\OT$ protocol with the receiver, xor his true inputs with the "on the fly inputs," and send the result to the receiver. The receiver is thus only able to uncover one of the two inputs. The actual security definitions for $\OT$ are more delicate than may be expected. The reader is hence referred to \cite{CSSW06} for a more complete discussion on  the security definitions of $\OT$, and the reduction to Random$\textit{-}\OT$. 
\begin{definition}[Random-$\OT^\ell$, based on \cite{Sch07}]
Random-$\OT^\ell$ is an interactive protocol between  a receiver $\Radv$ and a sender $\Sadv$ receiving input $1^\secpar$. $\Radv$ additionally has an input $b\in\Bitspace$. A random-$\OT^\ell$ protocol is complete if, at the end of communication between the honest parties, neither of the parties aborts, $\Sadv$ has output $m_0$, $m_1\in \Bitspace^\ell$, which distribute uniformly at random, and $\Radv$ has
output $r=m_\OTbit$.
\end{definition}

The completeness of $\AmOT^\ell$ is immediate.
\begin{theorem}\label{thm: AMOT Completeness}
$\AmOT^\ell$, specified in \cref{alg: Amnesic OT}, is complete.
\end{theorem}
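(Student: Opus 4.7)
The plan is to verify the three completeness conditions for random-$\OT^\ell$ under honest execution of $\AmOT^\ell$: that neither party aborts, that the receiver's output equals $m_\OTbit$, and that the sender's pair $(m_0, m_1)$ is (close to) uniformly distributed over $\Bitspace^\ell \times \Bitspace^\ell$.

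Non-abortion is immediate, since \cref{alg: Amnesic OT} specifies no abort step on the honest path, and the stall merely forces the receiver to perform its measurement before the timestep advances.

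For the identity $r = m_\OTbit$, I would invoke the elementary quantum fact that measuring $H^{\theta_i}\ket{x_i}$ in the basis indexed by $\theta_i$ returns $x_i$ deterministically. For every $i \in \Cons^\theta_\OTbit$ we have $\theta_i = \OTbit$, so the receiver's measurement of $H^\OTbit \cdot H^\OTbit \ket{x_i} = \ket{x_i}$ in the standard basis yields $\sigma_i = x_i$. Restricting to $\Cons^\theta_\OTbit$ gives $\sigma_{|\Cons^\theta_\OTbit} = x_{|\Cons^\theta_\OTbit}$, and hence $r = h_\OTbit(\sigma_{|\Cons^\theta_\OTbit}) = h_\OTbit(x_{|\Cons^\theta_\OTbit}) = m_\OTbit$ as required.

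The only substantive step is joint uniformity of $(m_0, m_1)$. Since $x \sample \Bitspace^\secpar$ is chosen independently of $\theta$, conditional on $\theta$ the substrings $x_{|\Cons^\theta_0}$ and $x_{|\Cons^\theta_1}$ are independent and individually uniform, because the two index sets are disjoint and partition $[\secpar]$; the hash functions $h_0, h_1$ are sampled independently from $\mathcal{H}$. Applying the Leftover Hash Lemma coordinate-wise, $h_c(x_{|\Cons^\theta_c})$ is statistically close to uniform on $\Bitspace^\ell$ as soon as $|\Cons^\theta_c| - \ell = \omega(\log \secpar)$. Since $|\Cons^\theta_c|$ is binomially distributed with parameters $(\secpar, 1/2)$, a Chernoff bound gives $|\Cons^\theta_c| \geq \secpar/2 - O(\sqrt{\secpar})$ with overwhelming probability, so for any parameter regime bounded away from $\secpar/2$, say $\ell \leq \secpar/3$, the total statistical distance of $(m_0, m_1)$ from uniform is negligible in $\secpar$.

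The main subtlety I anticipate is the interpretation of "distribute uniformly at random" in the definition of completeness. A literal reading cannot be satisfied by any finite universal hash family, so I would interpret this, as is standard in the bounded-quantum-storage literature, as uniformity up to statistical distance negligible in $\secpar$. This forces the mild parameter restriction above, but is otherwise seamless; all other steps are either one-line quantum calculations or standard information-theoretic folklore, so no significant technical obstacle arises beyond making this convention explicit.
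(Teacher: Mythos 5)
Your verification is correct, but note that the paper itself offers no argument at all for \cref{thm: AMOT Completeness}: it simply declares that completeness is immediate, implicitly meaning exactly the two one-line observations you make (no abort on the honest path, and $\sigma_{|\Cons^\theta_\OTbit}=x_{|\Cons^\theta_\OTbit}$ because qubits with $\theta_i=\OTbit$ are measured in their preparation basis, so $r=h_\OTbit(x_{|\Cons^\theta_\OTbit})=m_\OTbit$). The genuinely additional content in your write-up is the treatment of the uniformity of $(m_0,m_1)$, which the paper glosses over entirely. Your route---independence of $x_{|\Cons^\theta_0}$ and $x_{|\Cons^\theta_1}$ given $\theta$, a Chernoff bound on $|\Cons^\theta_c|$, and the Leftover Hash Lemma applied per coordinate---is sound and yields statistical closeness to uniform under a mild constraint such as $\ell\leq\secpar/3$; your observation that exact uniformity is unattainable in general is also right, since for the (exponentially rare) $\theta$ with $|\Cons^\theta_c|<\ell$ the hash output cannot be uniform, so the ``uniform output'' clause must be read up to negligible statistical distance, exactly as in the bounded-quantum-storage literature the paper follows. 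One small remark: if one instantiates $\mathcal H$ with a regular universal family (e.g.\ truncated affine maps over $\mathrm{GF}(2^{\kappa})$ with nonzero multiplier), each $m_c$ is exactly uniform whenever $|\Cons^\theta_c|\geq\ell$, so the LHL can be replaced by this structural fact plus the same Chernoff bound; this buys a slightly cleaner statement but changes nothing essential. In short, your proof is a faithful, more explicit version of what the paper treats as immediate, and the only caveat is the interpretive convention and parameter restriction you already flag, which the theorem statement itself leaves implicit.
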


Recall that~\cite{Sch07} proved that $\AmOT^\ell$ is also secure. In this section, we use the monogamy of entanglement property \cite{TFKW13} to provide an alternative proof for its security. The alternative proof further demonstrates the usefulness of the monogamy of entanglement tool and results in a proof that is arguably more compact. Furthermore, it suggests that construction based on coset states \cite{CLLZ21} instead of BB84 states is also secure due to an analogous property for these states \cite{CV21}.  In other cryptographic primitives, constructions based on coset states or their simplified variant, subspace states \cite{AC13}, often possess valuable properties that constructions based on BB84 states do not (compare, for example, \cite{AC13, BS16} with the weaker primitives in \cite{MVW12, BSS21}).
For simplicity, we only prove a weaker security notion than that described in \cite{Sch07}. As stated in \cref{rem: proof strengthening}, the latter part of the proof can easily be altered to achieve the stronger security notion.

\begin{definition}[Receiver and Sender Guess-Resistant Security]\label{def: guess-resistant security}
 For a given Random-$\OT^\ell$ scheme $\Pi$ with an honest sender $\Sadv$ and an honest receiver $\Radv$:
 \begin{itemize}
\item The sender's advantage is  $\epsilon_\Sadv:= 
\Sup_{\Sadv^*}{\Pr[\Sadv^* \text{ guesses \ensuremath{\OTbit} and \ensuremath{\Radv} does not Abort}]}-\frac{1}{2}$. We say that $\Pi$ is receiver secure if $\epsilon_\Sadv$ is negligible in $\secpar$.
\item The receiver's advantage  is $\epsilon_\Radv:= \Sup_{\Radv^*}{\Pr[\Radv^*\text{ guesses \ensuremath{(m_0, m_1)} and \ensuremath{\Sadv} does not Abort}]}-\frac{1}{2^\ell}$.   We say that $\Pi$ is sender secure if $\epsilon_\Radv$ is negligible in $\secpar$.
\end{itemize}
\end{definition}

\begin{remark} \label{rem: proof strengthening}
The same security notion as \cite{Sch07} can be achieved
by following the proof in \cite{Sch07}  after \cref{lem: joint Min entropy bound }. 
\end{remark}

\begin{theorem}\label{thm: AmOT guess resistant security}
In Quamnesia, $\AmOT^\ell$  (\cref{alg: Amnesic OT}) satisfies receiver and sender guess-resistant security for any $\ell\in \mathbb{N}$.
\end{theorem}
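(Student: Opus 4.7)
The plan is to handle receiver and sender guess-resistance separately. Receiver guess-resistance is immediate: in $\AmOT^\ell$ no message ever travels from the receiver to the sender, so the view of any $\Sadv^*$ is statistically independent of $\OTbit$ and the best guess succeeds with probability exactly $1/2$, yielding $\epsilon_\Sadv=0$.

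For sender guess-resistance, I would first invoke the Quamnesia hypothesis to \emph{classicalize} $\Radv^*$: since the \Stall{} following \cref{stp: sadv 1} erases any surviving quantum register on the receiver's side, I may assume without loss of generality that $\Radv^*$ applies some POVM $\{M_y\}$ to $\Stamp$ upon receipt, records the classical outcome $Y$, and only afterwards runs an arbitrary classical algorithm on $(Y,\theta,h_0,h_1)$ to produce its guess $(g_0,g_1)$ for $(m_0,m_1)$.

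The heart of the argument is then a reduction to the BB84 monogamy-of-entanglement game of \cite{TFKW13}. Given any such strategy---a POVM producing $Y$ together with functions $x'_0, x'_1$ of $(Y,\theta)$---define an MoE strategy by copying the classical outcome $Y$ to both virtual players and instructing each to output the same `stitched' guess $g$ with $g_i = (x'_0)_i$ when $\theta_i=0$ and $g_i = (x'_1)_i$ when $\theta_i=1$. Then $g = x$ exactly when $x'_0 = x|_{\Cons^\theta_0}$ and $x'_1 = x|_{\Cons^\theta_1}$, so the TFKW bound gives $\Pr[x'_0 = x|_{\Cons^\theta_0} \wedge x'_1 = x|_{\Cons^\theta_1}] \le \MOEExp^{\secpar}$. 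Equivalently, the average conditional min-entropy $\tilde H_\infty((x_0,x_1) \mid Y,\theta)$ is at least $\secpar \cdot \log\MOEExpInv$, which is the content of \cref{lem: joint Min entropy bound }.

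Privacy amplification finishes the proof. A leftover-hash-style argument using the independent universal families $h_0,h_1$ sampled in \cref{stp: sadv 2} converts the min-entropy bound into statistical closeness of $(m_0,m_1)$ to the uniform distribution on $\Bitspace^{2\ell}$ from $\Radv^*$'s viewpoint, provided $2\ell$ is sufficiently smaller than $\secpar\log\MOEExpInv$. This forces $\Pr[\Radv^* \text{ guesses } (m_0,m_1)] \le 2^{-2\ell}+\text{negl}(\secpar) \le 2^{-\ell}+\text{negl}(\secpar)$, so $\epsilon_\Radv$ is negligible. The main technical subtlety lies here: $h_0 \otimes h_1$ is only $2^{-\ell}$-almost universal into $\Bitspace^{2\ell}$ rather than a genuine universal hash, so one must either invoke an LHL variant for almost-universal families or, following \cite{Sch07}, apply the LHL coordinate-by-coordinate while carefully tracking conditional min-entropies to derive the joint closeness.
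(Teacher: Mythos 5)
Your treatment of sender-side security and your reduction to the monogamy-of-entanglement game are essentially the paper's argument: classicalize $\Radv^*$ using the \Stall, feed the classical outcome to two copies (Bob and Charlie), and conclude $H_\infty(X_0,X_1\mid \WRadv,\Theta)\geq \secpar\log(\MOEExpInv)$, i.e.\ \cref{lem: joint Min entropy bound }. (Strictly, to place this inside the MoE game you also need the EPR trick the paper uses---Bob and Charlie prepare a maximally entangled state and apply the receiver's channel to one half, so that Alice's later measurement in basis $\theta$ generates $x$ consistently---but that is a presentational point.)

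The privacy-amplification step, however, has a genuine gap: the statement you aim for is false. The pair $(m_0,m_1)$ is \emph{not} close to uniform on $\Bitspace^{2\ell}$ from the receiver's viewpoint, and no LHL variant can make it so: a receiver who simply measures every qubit in the standard basis (which causes no abort) learns $x_0$ exactly and hence $m_0=h_0(x_0)$ exactly, so it can guess $(m_0,m_1)$ with probability $2^{-\ell}$, not $2^{-2\ell}$. This is precisely why \cref{def: guess-resistant security} benchmarks against $2^{-\ell}$. The same example shows that the joint bound of \cref{lem: joint Min entropy bound } does not split into per-coordinate min-entropy bounds---the adversary can choose to know $X_0$ perfectly while all the entropy sits in $X_1$---so neither your ``almost-universal LHL on $h_0\otimes h_1$'' route (where the $2^{-\ell}$ collision term times $2^{2\ell}$ output values already ruins the bound) nor an unqualified ``coordinate-by-coordinate'' LHL can be carried out from what you have established. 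The missing ingredient is the min-entropy splitting lemma (\cref{cor: split lemma cor}): it produces a binary random variable $\CRadv$ such that $X_{\CRadv}$ has high \emph{smooth} min-entropy given $(\WRadv,\Theta,\CRadv)$ (\cref{lem: Smooth min entropy bound}); one then applies the smooth-entropy LHL only to $h(X_{\CRadv})$, and finally converts any receiver guessing both messages with probability $2^{-\ell}+\epsilon$ into a distinguisher with advantage $\epsilon$ for that single hashed value (\cref{lem:distinguisher between uniform and hash}). Without the splitting step (or an equivalent device) your argument, as written, does not yield a bound on $\epsilon_\Radv$.
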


\begin{proof}
 We first argue for security against a malicious sender. In $\AmOT$ (see \cref{alg: Amnesic OT}), no data is transmitted from the receiver to the sender; hence the view of any given $\Sadv^*$ is the same regardless of whether $\OTbit=0$ or $\OTbit=1$, meaning $\epsilon_{\Sadv}=0$.

 Security in light of a malicious receiver is more complex. Let $\Radv^*$ be an unbounded malicious receiver (but subject to the limitations of Quamnesia). It will be convenient to split the action of  $\Radv^*$ into two: $\Radv^*_1$, describing the action of $\Radv^*$  in place of step~\ref{stp: radv 1} of the protocol, which receives a quantum state, and keeps only some classical memento of it $w$;  and $\Radv^*_2$, describing the action of $\Radv^*$ in place of step~\ref{stp: radv 2} of the protocol,  receiving the transmission $\theta, h_0,h_1$ in addition to the classical memento $w$, and outputting guesses $(m'_0,m'_1)$. 
 
 \begin{definition} \label{def: basic variables}
For any given $\Radv^*_1$, let  $\WRadv$ denote the random variable describing the distribution of the string $w$ representing the classical state kept by $\Radv^*_1$ following the \Stall.  Let $\Theta$ be the random variable describing the choice of $\theta$ in step~\ref{stp: sadv 1} of the protocol. Similarly, let  $X_0, X_1$ be the random variables describing the choices of $x_0,x_1$ in step~\ref{stp: sadv 2}   of the protocol. \end{definition}

For the first part of the proof, we will see that for any  $\WRadv$ produced by $\Radv^*_1$, it is difficult to guess both $X_0$ and $X_1$ simultaneously.
\begin{lemma} \label{lem: joint Min entropy bound }
For any $\Radv_1^*$, $H_\infty(X_0,X_1|\WRadv,\Theta)\geq \secpar\log(\MOEExpInv)$.
\end{lemma}
 The lemma is an application of a result by \cite{TFKW13}, analyzing the following game between three players, Alice, Bob and Charlie:
 \begin{savenotes}
\begin{game}
 \caption{  Monogamy Of Entanglement Game $\MOE(\secpar)$
} 
\label{exp:MOE} 
\begin{algorithmic}[1] 
    \State Bob and Charlie create a tripartite quantum state $\rho_{ABC}$ and split it to three. Alice receives the $\secpar$ qubits $A$ subsystem, Bob receives the $B$ subsystem, and Charlie receives the $C$ subsystem. From this point onwards, Bob and Charlie do not communicate anymore.  
    \State \label{stp: Alice }  Alice samples  $\theta\sample \Bitspace^\secpar $ uniformly, and performs the measurement $\set{\ketbra{x_\theta}}_{x\in \Bitspace^\secpar}$\footnote{ $\ket{x_{\theta}}=H^{\tensor\theta}\ket{x}$. } to receive some string $x$. 
    \State Alice sends $\theta$ to both Bob and Charlie.
    
    \State Bob outputs a string $y$, and Charlie outputs a string $z$.
\end{algorithmic}
The value of the game is $1$ if $x=y=z$. 
 \end{game}
\end{savenotes}

 \begin{theorem}[{Monogamy Of Entanglement, adapted from~\cite[Theorem 3.4]{TFKW13}}] \label{thm: MOE bound}
For any (computationally unbounded) Bob and Charlie:

\[\Pr[\MOE(\secpar)=1]\leq \MOEExp^\secpar.\]
 \end{theorem}

 \begin{proof}[Proof of \cref{lem: joint Min entropy bound }]
 Let $\Radv^*$ be an unbounded malicious receiver such that \[H_\infty(X_0,X_1|\WRadv,\Theta)=\log\Big{(}\frac{1}{q(\secpar)}\Big{)},\]
for an appropriate function $1\geq q(\secpar)>0$. Recall the operational meaning of min-entropy as the guessing probability: when receiving a sample from $\WRadv$ and $\Theta$, there exists a (not necessarily efficient) algorithm $\Fadv$ that can correctly guess $(X_0,X_1)$ with probability $q(\secpar)$. By standard convexity arguments, it is easy to argue $\Fadv$ is W.L.O.G deterministic. Let $\mathcal{C}$ be the channel which is the same as the one $\Radv^*_1$ applies but copies (with CNOT) the result in two identical registers $B, C$. Observe the following strategy by Bob and Charlie in Game~\ref{exp:MOE}:
\begin{enumerate}

    \item Prepare the maximally entangled state between $A$ and $D$ (where $D$, like $A$, is a $\secpar$ sized quantum register.)   $\ket{\phi}=\frac{1}{\sqrt{2^{2\secpar}}}\sum_{x\in \Bitspace^{ \secpar}}\ket{x}_A \ket{x}_D$.
    \item Apply $\mathcal{C}$ on the $D$ subsystem. Bob receives the subsystem on $B$, and Charlie the one on $C$.
    \item Alice Performs a measurement as detailed in  step~\ref{stp: Alice } of Game~\ref{exp:MOE}, and sends $\theta$ to Bob and Charlie.
    \item Following Alice's measurement, Bob and Charlie's registers both contain an identical classical string $w$---as specified by the action of $\Cadv$. Both now independently perform the optimal algorithm $\Fadv$ on $w$. Hence both derive an identical pair of strings $(y_0,y_1)=(z_0,z_1)$. Bob and Charlie then reorder their strings according to $\theta$ to get identical outputs $y=z$.
\end{enumerate}

Observe that for any $\theta$ \[\ket{\phi}=\frac{1}{\sqrt{2^{2\secpar}}}\sum_{x\in \Bitspace^{ \secpar}}\ket{x}_A \ket{x}_D=\frac{1}{\sqrt{2^{2\secpar}}}\sum_{x\in \Bitspace^{ \secpar}}\ket{x_\theta}_A \ket{x_\theta}_D.\]
This fact is evident by noticing that the inner product of the two expressions is $1$.  Ergo, after the action of $\Cadv$, the tripartite state can be written as $\frac{1}{2^{2 \secpar}}\sum_{x,x'\in \Bitspace^{ \secpar}}\ket{x_\theta}\bra{x'_\theta}_A\tensor \mathcal{C}(\ket{x_\theta}\bra{x'_\theta})_{B,C}$, and following Alice's measurement, it is $\ket{x_\theta}\bra{x_\theta}_A\tensor \mathcal{C}(\ket{x_\theta}\bra{x_\theta})_{B, C}$, for  $x,\theta$ that are according to the sampling and the result of the measurement done by Alice.

The classical results stored in both $B$ and $C$  after applying $\mathcal{C}$ are identical and are distributed the same as in \cref{alg: Amnesic OT}.
Since the distribution of $x,\theta$ is also the same as in \cref{alg: Amnesic OT}, and because $\Fadv$ is deterministic, we know that with probability $q(\secpar)$ Bob and Charlie are both correct with their guesses for $x_0,x_1$, i.e., $y_0=z_0=x_0$, and $y_1=z_1=x_1$. In that event, $y=z=x$ and thus $q(\secpar)\leq \MOEExp^{\secpar}$.
As a result of the bound in \cref{thm: MOE bound}, \cref{lem: joint Min entropy bound } is immediately implied by observing that $\MOEExp^{-1}=\MOEExpInv$.
 \end{proof}
 Our next step demands some caution: we know   $X_0$ and $X_1$ must be hard to guess simultaneously given $\WRadv$ and $\theta$; so it appears plausible that one of the two has high min-entropy conditioned on $\WRadv$ and $\theta$. However, this is false: it is not hard to come up with $\WRadv$ that allows one strategy to guess  $X_0$ with probability $\frac{1}{2}$ and another strategy  to guess $X_1$ with probability $\frac{1}{2}$ while no strategy can guess both simultaneously with probability more than $\MOEExp^{\secpar}$ (as \cref{lem: joint Min entropy bound } dictates). In order to overcome this difficulty, a weaker statement must be made instead: first, we argue not over the entropy of $X_0$ or $X_1$  but of the entropy of some "mix" of the two, and second, due to the pathologies of the min-entropy, we are forced to argue not that the min-entropy of that mix is large, but that its smooth min-entropy \cite{Ren08,RW05}---for some $\delta>0$---is large. 
  \begin{lemma}
  \label{lem: Smooth min entropy bound}
 For any $\Radv_1^*$  there exists a binary random variable $\CRadv$ such that for any $\delta>0$, \[H^{\delta}_\infty(X_{\CRadv}|\WRadv,\Theta,\CRadv)\geq\frac{\secpar}{2}\log(\MOEExpInv)-1-\log(\frac{1}{\delta}).\]
  \end{lemma}
  \cref{lem: Smooth min entropy bound} is derived from \cref{lem: joint Min entropy bound } via the following corollary by \cite{Sch07}.
  \begin{corollary}
[{Corollary of Min-Entropy Splitting Lemma	, \cite[ Corollary 2.16]{Sch07}}]
 \label{cor: split lemma cor}
Let $X_0$, $X_1$, $Z$ be random variables with
$H_{\infty}(X_0,X_1|Z) \geq \alpha$. Then, there exists a binary random variable $C$ such
that for every $\delta>0$
$H^{\delta}_\infty(X_{1-C}|(Z,C))\geq \frac{\alpha}{2}-1-\log(\frac{1}{\delta})$.
  \end{corollary} 
   \cref{cor: split lemma cor} is itself achieved by combining the min-entropy splitting lemma,  first proved in a preliminary version of \cite{Wul07}, with the min-entropy chain rule \cite{Ren08}. We will not repeat the details here.
   
 For the last part of the proof, we recall a smooth min-entropy variant of the leftover hash lemma and use it to provide a bound on the probability of any $\Radv^*$ to guess $m_0$ and $m_1$ simultaneously.
 \begin{lemma}[{Leftover Hash Lemma for Smooth Entropy With Side Information, based on \cite[Corollary 5.6.1.]{Ren08}}]
 Let $\Hash$ be a family of universal hash functions from $D$ to $\Bitspace^\ell$, i.e., for any $x \neq  x'$,
$\Pr_{h\sample H} [h(x) = h(x')] = 2^{-\ell}$.  Let $(X,Y)$ be a joint random variable where $X$ is a classical random variable over $D$, and $Y$ is a classical random variable. Then, for any $\delta\geq 0$,
\[||  (\langle h \rangle,h(X),Y)-(\langle h \rangle,U_\ell ,Y)||_1\leq  2^{-{\frac{1}{2}(H^{\delta}_\infty(X|Y)-\ell)}}+2\delta,\] 
where in the above equation, $h$ is sampled uniformly at random from $\Hash$, and $U_\ell$ distributes uniformly on $\ell$ bits strings. 
\end{lemma}
  Plugging  \cref{lem: Smooth min entropy bound} with $\delta=2^{-\frac{\secpar}{20}},X= X_{\CRadv}$, and $Y= (\WRadv,\Theta,\CRadv)$ to the Leftover Hash Lemma, we get that for any $\secpar$:

\begin{equation}\label{eq: hash lemma bound 2}
    || (\langle h \rangle , h(X_{\CRadv}),\WRadv,\Theta,\CRadv)-(\langle h \rangle , U_\ell,\WRadv,\Theta,\CRadv)||_1\leq {2^{\frac{1}{2}(\ell-{\frac{ \secpar}{2}\log(\MOEExpInv)+1+\frac{\secpar}{20})}}}+2^{-\frac{\secpar}{20}+1}.
\end{equation}

Observe that
\begin{align*}
 &2^{\frac{1}{2}(\ell-{\frac{\secpar}{2}\log(\MOEExpInv)+1+\frac{\secpar}{20})}}=2^{\frac{1}{2}(\ell+1+\frac{\secpar}{20})-{\frac{\secpar}{4}\log{(\MOEExpInv)}}}\\
 \leq&2^{\ell+1+\frac{\secpar}{20}-{\frac{\secpar}{4}\log{(\MOEExpInv)}}} \leq2^{\ell+1+\frac{\secpar}{20}-0.0571\secpar}\\
=&2^{\ell+1-0.0071\secpar},\end{align*}

meaning that for any $\secpar$
\begin{equation}\label{eq: hash lemma bound 3}
    || (\langle h \rangle , h(X_{\CRadv}),\WRadv,\Theta,\CRadv)-(\langle h \rangle , U_\ell,\WRadv,\Theta,\CRadv)||_1\leq 2(2^{\ell-0.0071\secpar}+2^{-\frac{\secpar}{20}}).
\end{equation}

To finish the proof, we present the following lemma:
\begin{lemma}\label{lem:distinguisher between uniform and hash}
 Let $\Radv^*=(\Radv^*_1,\Radv^*_2)$  be a malicious receiver guessing both $m_0$ and $m_1$ correctly with probability $\frac{1}{2^\ell}+\epsilon(\secpar)$ without the sender aborting. Then there exists a distinguisher $\Dadv$ distinguishing between the two distributions described in the l.h.s. of \cref{eq: hash lemma bound 3} with an advantage at least $\epsilon(\secpar)$.
\end{lemma}
 
The statistical distance in \cref{eq: hash lemma bound 3} denotes the optimal advantage of any distinguisher in distinguishing between the two distributions. Hence from \cref{lem:distinguisher between uniform and hash} we get that
\[\epsilon(\secpar)\leq 2(2^{\ell-0.0071\secpar}+2^{-\frac{\secpar}{20}}),\]
meaning $\epsilon(\secpar)$ decreases exponentially in $\secpar$. In turn, $\epsilon_{\Radv}$, the supremum over all $\Radv^*$, is negligible in $\secpar$. This concludes the proof of \cref{thm: AmOT guess resistant security}.
\end{proof}

\begin{proof}[Proof of \cref{lem:distinguisher between uniform and hash}]
$\Dadv$, on input $(\langle h \rangle , \tilde{m}, w,\theta,c)$, will simulate $\Radv^*_2$ (the action of $\Radv^*$ in step~\ref{stp: radv 2}) on inputs $\langle \tilde{h}_{0} \rangle, \langle \tilde{h}_{1} \rangle ,\theta$ in addition to  the kept state $w$, where $\tilde{h}_{c}=h$, and $\tilde{h}_{1-c}$ is sampled from $\Hash$. $\Radv^*_2$ responds with guesses $m'_0, m'_1$ and $\Dadv$ responds with $1$ if $\tilde{m}=m'_{c}$, and $0$ otherwise. Notice that regardless of the value of $c$, $\langle \tilde{h}_{0} \rangle, \langle \tilde{h}_{1} \rangle$ distribute uniformly at random and independently of each other, same as $\langle {h}_{0} \rangle, \langle {h}_{1} \rangle$ do in true communication of $\Radv^*$ with the sender. 
 If $\tilde{m}$ is a random string, it is clear that $\Dadv$ will respond $1$ with probability exactly $\frac{1}{2^\ell}$. If, however, $\tilde{m}=h(x_c)$, we observe that $m'_0,m'_1$ will distribute the same as in true communication of $\Radv^*$ with the sender, and hence $\tilde{m}=m'_{c}$ with probability at least $\frac{1}{2^\ell}+\epsilon(\secpar)$, implying $\Dadv$'s advantage is at least $\epsilon(\secpar)$. 
\end{proof}

\end{document}